\newtheorem{theorem}{Theorem}[section]
\newtheorem{proposition}[theorem]{Proposition}
\newtheorem{lemma}[theorem]{Lemma}
\newtheorem{corollary}[theorem]{Corollary}
\theoremstyle{definition}
\newtheorem{definition}[theorem]{Definition}
\newtheorem{example}[theorem]{Example}
\newtheorem{remark}[theorem]{Remark}
\newenvironment{prp}{\begin{proposition}}{\end{proposition}}
\newenvironment{rmk}
{\begin{remark}}{\end{remark}}
\newenvironment{dfn}
{\begin{definition}}{\end{definition}}
\newenvironment{ex}
{\begin{example}}{\end{example}}
\newcommand{\R}{\mathbb R}
\newcommand{\C}{\mathbb C}
\newcommand{\N}{\mathbb N}
\newcommand{\bS}{\mathbb S}
\newcommand{\Z}{\mathbb Z}
\newcommand{\spn}{\mathrm{Span}}
\newcommand{\cD}{\mathcal D}
\newcommand{\cG}{\mathcal G}
\newcommand{\cS}{\mathcal S}
\newcommand{\ccH}{\mathscr H}
\newcommand{\ccB}{\mathscr B}
\newcommand{\g}{\mathfrak}
\newcommand{\ccK}{\mathscr K}
\newcommand{\lag}{\langle}
\newcommand{\rag}{\rangle}
\newcommand{\oline}{\overline}
\newcommand{\sseq}{\subseteq}
\newcommand{\Ad}{\mathrm{Ad}}
\newcommand{\dd}{\mathsf{d}}
\newcommand{\Lie}{\mathrm{Lie}}
\newcommand{\ood}{{\oline{\mathbf{1}}}}
\newcommand{\eev}{{\oline{\mathbf{0}}}}
\newcommand{\Hom}{\mathrm{Hom}}
\newcommand{\sfe}{\mathsf{e}}
\newcommand{\VecG}{\mathsf{Vec}_\Gamma}
\newcommand{\HH}{\mathscr H}
\newcommand{\Rep}{\mathsf{Rep}}
\newcommand{\semS}{\mathcal S}
\begin{document}

\keywords{Supermanifolds, Lie supergroups, $\Gamma$-supermanifolds, $\Gamma$-Lie supergroups, unitary representations, GNS construction.}
\mathclass{17B75; 22E45.}

\abbrevauthors{M. Mohammadi and H. Salmasian}
\abbrevtitle{THE GNS construction for $\Z_2^n$-graded Lie supergroups}

\title{The Gelfand--Naimark--Segal construction for unitary representations of $\Z_2^n$-graded Lie supergroups}

\author{Mohammad Mohammadi}
\address{Department of Mathematics,\\
Institute for Advanced Studies
in Basic Sciences (IASBS),\\
No. 444, Prof. Yousef Sobouti Blvd.
P. O. Box 45195-1159,\\
Zanjan, Iran\\
Email: moh.mohamady@iasbs.ac.ir}

\author{Hadi Salmasian}
\address{Department of Mathematics and Statistics,\\ University of Ottawa,
585 King Edward Ave,\\ Ottawa, ON,
Canada K1N 6N5.\\
Email: hsalmasi@uottawa.ca}

\maketitlebcp

\begin{abstract}
We establish a Gelfand--Naimark--Segal construction which yields a correspondence between cyclic unitary representations and positive definite superfunctions 
of a general class of 
$\Z_2^n$-graded Lie supergroups. 
\end{abstract}

\section{Introduction}

It is by now well known that unitary representations of Lie supergroups appear in various areas of mathematical physics related to supersymmetry
(see \cite{FQS}, \cite{GKO}, \cite{FSZ}, \cite{SaSt} as some important examples among numerous references). 
A mathematical approach to analysis on  unitary representations of Lie supergroups was pioneered in \cite{CCTV}, where unitary representations are defined  based on the notion of the Harish-Chandra pair associated to a Lie supergroup.

More recently, there is growing interest in studying generalized supergeometry, that is, geometry of graded manifolds where the grading group is not $\Z_2$, but $\Z_2^n:=\Z_2\times\cdots\times\Z_2$. The foundational aspects of the theory of $\Z_2^n$-supermanifolds were recently established in the works of 
Covolo--Grabowski--Poncin
(see \cite{CovoloPoncin2},  \cite{CovoloPoncin}) and 
Covolo--Kwok--Poncin 
(see \cite{CovoloPoncin3}).
 
In this short note, we make a first attempt to extend the theory of unitary representations to the
$\Z_2^n$-graded setting.
To this end, we use the concept of  a $\Z_2^n$-graded Harish-Chandra pair, which consists of a pair $(G_0,\g g)$ where $G_0$ is a Lie group and 
$\g g$ is a $\Z_2^n$-graded generalization of a Lie superalgebra, sometimes known as  
a \emph{Lie color algebra}.
Extending one of the main results of \cite{NeSaTG}, 
in Theorem \ref{GNS}
we prove that under a 	``perfectness'' condition on $\g g$, there exists a Gelfand--Naimark--Segal (GNS)  construction which yields a correspondence between positive definite smooth super-functions  and cyclic unitary representations of $(G_0,\g g)$. 
Theorem \ref{GNS} is applicable to interesting examples, such as 
$\Z_2^n$-Lie supergroups of classical type, e.g., the Harish-Chandra pair corresponding to $\g{gl}(V)$  defined in Example \ref{EXglV}, where $V$ is a $\Z_2^n$-graded vector space.

The key technical tool in the proof of Theorem \ref{GNS} is a Stability Theorem (see Theorem \ref{stability theorem}) which guarantees the existence of a unique unitary representation associated to a weaker structure, called a \emph{pre-representation}. Such a Stability Theorem holds unconditionally when $n=1$. But for $n>1$, it is not true in general. We are able to retrieve a variation of the Stability Theorem under the aforementioned extra condition on $\g g$. 
However, this still leaves the question of a general GNS construction open for further investigation.
We defer the latter question, presentation of explicit examples, as well as some proof details, to a future work. \\

\subsection*{Acknowledgements} We thank Professor Karl-Hermann Neeb for  illuminating conversations related to 
Theorem \ref{stability theorem} and Remark \ref{lifting}, and for many useful comments on a preliminary draft of this article, which improved our presentation substantially. This work was 
completed while the first author was visiting the University of Ottawa using a grant from the Iranian Ministry of Science, Research, and Technology. During this project, the second author was supported by an NSERC Discovery Grant.

\section{$\Z_2^n$-supergeometry}

We begin by reviewing the basic concepts of $\Z_2^n$-graded supergeometry, in the sense of 
\cite{CovoloPoncin}.
Let $\Gamma:=\Z_2^n:=\Z_2\times\cdots\times \Z_2$
where $\Z_2:=\{\eev,\ood\}$, and let $\mathsf b:\Gamma\times\Gamma\to \Z_2$ be a 
non-degenerate symmetric $\Z_2$-bilinear map. 
By a result of Albert (see \cite[Thm 6]{Albert} or \cite[Sec. 1--10]{Kaplansky}), if $\mathsf b(\cdot,\cdot)$ is of  alternate type (i.e., $\mathsf{b}\mathsf(a,a)=\eev$ for every $a\in\Gamma$), then $\mathsf b(\cdot,\cdot)$ is equivalent to the standard ``symplectic'' form
\[
\mathsf b_-(a,b):=\sum_{j=1}^n
a_{2j-1}b_{2j}+a_{2j}b_{2j-1}
\quad\text{ for every }a=(a_1,\ldots,a_n)\text{ and }b=(b_1,\ldots,b_n)\in\Gamma
,\]
whereas if $\mathsf b(a,a)\neq \eev$ for some $a\in \Gamma$, then $\mathsf b(\cdot,\cdot)$ is equivalent to the standard symmetric form \[
\mathsf b_+(a,b):=\sum_{j=1}^ma_jb_j
\quad\text{ for every }a=(a_1,\ldots,a_n)\text{ and }b=(b_1,\ldots,b_n)\in\Gamma
.
\] 

Henceforth we assume that $\mathsf b(\cdot,\cdot)$ is not of alternate type.
Since an equivalence of $\mathsf b(\cdot,\cdot)$ and $\mathsf b_+(\cdot,\cdot)$ is indeed an automorphism of the finite abelian 
group $\Gamma$, without loss of generality 
from now on we can assume that $\mathsf b=\mathsf b_+$.
In particular, from now on we represent an 
element $a\in \Gamma$ by $a:=\sum_{j=1}^n a_j\sfe_j$, where $\{\sfe_j\}_{j=1}^n$ is an orthonormal basis of $\Gamma$
with respect to $\mathsf b(\cdot,\cdot)$.

We now define $\beta:\Gamma\times \Gamma\to \{\pm1\}$ by  $\beta(a,b):=(-1)^{\mathsf b(a,b)}$ for  $a,b\in\Gamma$.
We equip the category $\VecG$ of $\Gamma$-graded complex vector spaces with the symmetry operator
\begin{equation}
\label{SVWW}
\bS_{V,W}:V\otimes W\to W\otimes V\ ,\ \bS_{V,W}(v\otimes w):=\beta(|v|,|w|)w\otimes v,
\end{equation}
where $|v|\in\Gamma$ denotes the degree of a homogeneous vector $v\in V$. 

\begin{rmk}
As it is customary in supergeometry, equality
\eqref{SVWW}
 should be construed as a relation for homogeneous vectors that  is subsequently extended by linearity to non-homogeneous vectors. 
In the rest of the manuscript 
 we will stick to this convention.
\end{rmk}

Equipped with $\bS$ and the usual 
$\Gamma$-graded tensor product of vector spaces, 
 $\VecG$ is a symmetric  monoidal category.
This fact was also observed in \cite[Prop. 1.5]{Benk}.

\begin{rmk}
\label{lifting}
Note that $\beta:\Gamma\times \Gamma\to\{\pm1\}$ is a 2-cocycle. In fact it represents the obstruction to the lifting of the map $\Gamma\to\{\pm 1\}$, $(a_1,\dots,a_n)\mapsto (-1)^{\sum_{i=1}^na_i}$, with respect to the exact sequence 
\[
\xymatrix{
1\ar[r] & \{\pm 1\}\ar[r] & \{\pm 1,\pm i\}\ar[r]^{\ \ \,t\mapsto t^2}& \{\pm 1\} \ar[r] &1},
\]
where $i:=\sqrt{-1}$. More precisely, the lifting obstruction cocycle is naturally represented by 
\[
\delta:\Gamma\times \Gamma\to \{\pm 1\}\ ,\ (a_1,\ldots,a_n)\mapsto (-1)^{\sum_{1\leq i,j\leq n}a_i b_j},
\]
but $\beta-\delta=d\eta$ where $\eta(a_1,\ldots,a_n):=(-1)^{\sum_{1\leq i\neq j\leq n}a_ia_j}$. We thank Professor  Karl-Hermann Neeb for letting us know about this property of $\beta$.
\end{rmk}

Let $\prec$ denote the lexicographic order on elements of $\Gamma$. That is, for $a:=\sum_{j=1}^n a_j\sfe_j$ and $b:=\sum_{j=1}^nb_j\sfe_j$, we set $a\prec b$ if and only if there exists some $1\leq j\leq n$ such that
$a_j=\eev$, $b_j=\ood$, and 
$a_k=b_k$ for all $k<j$, . 
Thus we can express $\Gamma$ as $\Gamma=\{\gamma_0\prec \gamma_1\prec\cdots\prec \gamma_{2^n-1}\}$, where $\gamma_0=0$. 

Let $p\in \N\cup\{0\}$ and let ${\mathbf q}:=(q_1,\ldots,q_{2^n-1})$ be a $(2^n-1)$-tuple such that $q_j\in \N\cup\{0\}$ for all $j$. We set 
 $|{\mathbf q}|:=\sum_{j=1}^{2^n-1}q_j$. 
By a \emph{$\Gamma$-superdomain} of dimension $p|\mathbf{q}$ we mean a locally ringed space $(U,\mathcal O_U)$ such that $U\subset\R^p$ is an open set, and the structure sheaf $\mathcal O_U$ is given by 
\[
\mathcal O_U(U'):=C^\infty(U';\C)[[\xi_1,\cdots,\xi_{|\mathbf q|}]]\quad\text{ for every open set }U'\sseq U,\]
where the right hand side denotes the $\Gamma$-graded algebra of formal power series 
in
variables $\xi_1,\ldots,\xi_{|{\mathbf{q}}|}$
with coefficients in $C^\infty(U';\C)$,
 such that $|\xi_{r}|=\gamma_k$ for 
$\sum_{j=1}^{k-1}q_j<r\leq \sum_{j=1}^{k}q_j$, subject to relations
\[
\xi_r\xi_s=\beta(|\xi_r|,|\xi_s|)\xi_s\xi_r
\quad\text{ for every }1\leq r,s\leq |\mathbf q|.
\]
By a \emph{smooth $\Gamma$-supermanifold} $M$ of dimension $p|\mathbf q$ we mean a locally ringed space $(M, \mathcal O_M)$ that is locally isomorphic to a $p|\mathbf q$-dimensional $\Gamma$-superdomain. From this viewpoint, a \emph{$\Gamma$-Lie supergroup} is a group object in the category of smooth $\Gamma$-supermanifolds.

As one expects, to a $\Gamma$-Lie supergroup one can canonically associate a \emph{$\Gamma$-Lie superalgebra}, which is an object of $\VecG$ of the form $\mathfrak g=\bigoplus_{a\in \Gamma}\mathfrak g_a$, equipped with a $\Gamma$-superbracket
$[\cdot,\cdot]: \mathfrak g\times\mathfrak g \to\g g$
that satisfies the following properties:
 \begin{enumerate}
\item [(i)] $[\cdot,\cdot]$ is bilinear, and $[\g g_a,\g g_b]
\subset \g g_{ab}$ for  $a,b\in\Gamma$.
\item[(ii)]    $[x,y]=-\beta(a,b)[y,x]$ for  $x\in\mathfrak g_a, y\in\mathfrak g_b$, where $a,b\in\Gamma$.
\item [(iii)] $[x,[y,z]]=[[x,y],z]+\beta(a,b)\beta(a,c)[y,[z,x]]$ for  $x\in \mathfrak g_a, y\in\mathfrak g_b, z\in\mathfrak g_c$, where $a,b,c\in\Gamma$.
  \end{enumerate}
\begin{rmk}
We remark that a $\Gamma$-Lie superalgebra is more commonly known as a \emph{Lie color algebra}. Nevertheless, in order to keep our nomenclature  compatible with  \cite{CovoloPoncin}, we
use the term  $\Gamma$-Lie superalgebra instead.
\end{rmk}

From classical supergeometry (that is, when $n=1$) one knows that the category of Lie supergroups can be replaced by another category with a more concrete structure,  known as the category of \emph{Harish-Chandra pairs}  (see \cite{Kostant}, \cite{Koszul}). A similar statement holds in the case of $\Gamma$-Lie supergroups.
\begin{dfn}
A \emph{$\Gamma$-Harish-Chandra pair} is a pair $(G_0,\g g)$ where $G_0$ is a Lie group, and $\mathfrak g=\bigoplus_{a\in \Gamma}\mathfrak{g}_a$ is a $\Gamma$-Lie superalgebra equipped with an action  $\Ad:G_0\times \g g\to \g g$ of $G_0$ by linear operators that preserve the $\Gamma$-grading, 
which extends the adjoint action of $G_0$ on $\g g_0\cong\Lie(G_0)$.  
\end{dfn}

\begin{ex}
\label{EXglV}
Let $V:=\bigoplus_{a\in\Gamma}V_a$ be a $\Gamma$-graded vector space. The $\Gamma$-Lie superalgebra $\g{gl}(V)$ is the vector space of linear transformations on $V$, with superbracket $[S,T]:=ST-\beta(|S|,|T|)TS$. One can also consider the $\Gamma$-Harish-Chandra pair $(G_0,\g g)$, where $G_0\cong\prod_{a\in\Gamma}\mathrm{GL}(V_a)$.

\end{ex}

Indeed the $\Gamma$-Harish-Chandra pairs  form a category in a natural way.  The morphisms of this category are pairs of maps $
(\phi,\varphi):(G,\g g)\to (H,\g h),
$ where $\phi:G\to H$ is a Lie group homomorphism and $\varphi:\g g\to\g h$ is a morphism in the category of $\Gamma$-Lie superalgebras such that $\varphi\big|_{\g g_0}=\dd\phi$.
The following statement plays a key role in the study of $\Gamma$-Lie supergroups.
\begin{prp}
The category of $\Gamma$-Lie supergroups is isomorphic to the category of $\Gamma$-Harish-Chandra pairs.
\end{prp}
\begin{proof}
The proof is a straightforward but lengthy modification of the argument for the analogous result in the case of ordinary supergeometry (see   \cite{Kostant}, \cite{Koszul}, or \cite{Fioresi}). Therefore we only sketch an outline of the proof. The functor from  $\Gamma$-Lie supergroups to $\Gamma$-Harish-Chandra pairs is easy to describe: a $\Gamma$-Lie supergroup $(G_0,\mathcal O_{G_0})$ is associated to the Harish-Chandra pair $(G_0,\g g)$, where $\g g$ is the $\Gamma$-Lie superalgebra of $(G_0,\mathcal O_{G_0})$. 
As in the classical super case, the functor associates a homomorphism of $\Gamma$-Lie supergroups  $(G_0,\mathcal O_{G_0})\to(H_0,\mathcal O_{H_0})$ to the pair of underlying maps $G_0\to H_0$ and the tangent map at identity $\g g\to\g h$.
Conversely, from a $\Gamma$-Harish-Chandra pair $(G_0,\g g)$ we construct a Lie supergroup $(G_0,\mathcal O_{G_0})$ as follows. For every open set $U\sseq G_0$, we set
$\oline{\mathcal O}_{G_0}(U):=\Hom_{\g g_0}
\big(\mathfrak{U}(\g g_\C^{}),C^\infty(U;\mathbb C)\big)
$, where $\mathfrak{U}(\g g_\C^{})$ denotes the universal enveloping algebra of $\g g_\C^{}:=\g g\otimes_\R\C$. The $\Gamma$-superalgebra structure on $\oline{\mathcal O}_{G_0}(U)$ is defined using the algebra structure of 
$C^\infty(U;\mathbb C)$ and the $\Gamma$-coalgebra structure of $\mathfrak{U}(\g g_\C^{})$, 
exactly as in the classical super case.
Using the PBW Theorem for $\Gamma$-Lie superalgebras (see \cite{Scheunert}) one can see that $(G_0,\oline{\mathcal O}_{G_0})$ is indeed a  
$\Gamma$-supermanifold (see \cite[Prop. 7.4.9]{Fioresi}).
The definition of the $\Gamma$-Lie supergroup structure of 
$(G_0,\oline{\mathcal O}_{G_0})$ is similar to the classical case as well (see \cite[Prop. 7.4.10]{Fioresi}). Finally, to show that the two functors are inverse to each other, we need a  sheaf isomorphism $\mathcal O_{G_0}\cong
\oline{\mathcal O}_{G_0}$. This sheaf isomorphism is given by the maps
\[
\mathcal O_{G_0}(U)\to
\oline{\mathcal O}_{G_0}(U)\ ,\
s\mapsto\left[D\mapsto \beta(|D|,|s|)\widetilde{\mathrm{L}_D s}\right]
\text{ for }D\in\mathfrak{U}(\g g_\C^{}),\] 
where $|D|,|s|\in\Gamma$ are the naturally defined degrees, $\mathrm{L}_D$ denotes the left invariant differential operator 
on $(G_0,\mathcal O_{G_0})$
corresponding to $D$, and $\widetilde{\mathrm{L}_D s}$ means evaluation of the section at points of $U$. The proof of bijective correspondence of morphisms in the two categoies is similar to 
\cite[Prop. 7.4.12]{Fioresi}.
\end{proof}

\section{$\Gamma$-Hilbert superspaces and unitary representations}
In order to define a unitary representation of a $\Gamma$-Harish-Chandra pair, one needs to obtain a well-behaved definition
 of Hilbert spaces in the category $\VecG$. 
This is our first goal in this section.

For any $a=\sum_{j=1}^na_j\sfe_j\in \Gamma$, set 
$\mathbf u(a):=|\{1\leq j\leq n\,:\,a_j=\ood\}|$ (for example, $\mathbf u(\sfe_1+\sfe_3+\sfe_4)=3$) and define
  \begin{align}\label{alpha0}
  \alpha(a):=e^{\frac{\pi i}{2}\mathbf u(a)}.
  \end{align}

  \begin{dfn}
  \label{dfn-prehilb}
  Let $\HH\in\mathsf{Obj}(\VecG)$. We call $\HH$ a \emph{$\Gamma$-inner product space} if and only if it is equipped with a non-degenerate sesquilinear form $\langle\cdot,\cdot\rangle$ that satisfies the following properties:
  \begin{enumerate}
  \item[i)] $\langle\HH_a,\HH_b\rangle=0$, for $a,b\in \Gamma$ such that $a\neq b$.
  \item[ii)] $\langle w,v\rangle=\beta(a,a)\overline{\langle v,w\rangle}$, for $v,w\in\mathscr H_a$ where $a\in\Gamma$.
  \item[iii)]  $\alpha(a)\langle v,v\rangle\geq 0$, for every $v\in \mathscr H_a$ where $a\in \Gamma$.
  \end{enumerate} 
  \end{dfn}
  \begin{rmk}
  The choice of $\alpha$ in \eqref{alpha0} is made as follows.
 The most natural property that one expects from Hilbert spaces is that the tensor product of (pre-)Hilbert spaces is a (pre-)Hilbert space.
     Thus, we are seeking $\alpha:\Gamma\to\C^\times$ such that the tensor product of two 
  $\Gamma$-inner product spaces is also a $\Gamma$-inner product space. 
Clearly after scaling the values of $\alpha$ by positive real numbers we can assume that $\alpha(\Gamma)\sseq\{\pm 1,\pm i\}$.   
  For two
  $\Gamma$-inner product spaces
  $\mathscr H$ and $\mathscr  K$, one has
  \[
  (\mathscr H\otimes \mathscr K)_a:=\bigoplus_{bc=a}
  \mathscr H_{b}\otimes\mathscr H_{c},
  \]
  equipped with the canonically induced sesquilinear form
  \[
  \langle v\otimes w,v'\otimes w'\rangle_{\mathscr H\otimes\mathscr K}:=\beta(|w|,|v'|)\langle v,v'\rangle_\mathscr H\langle w,w'\rangle_\mathscr K.
  \]
Closedness under tensor product implies that
 \begin{equation}
 \label{EqDfalphb}
  \alpha(bc)=\beta(b,c)\alpha(b)\alpha(c),\qquad \text{for all } b,c \in \Gamma.
  \end{equation}
In the language of group cohomology, this means that the 2-cocycle $\beta$ satisfies $\beta=d\alpha$. Consequently,
up to twisting by a group homomorphism $\Gamma\to \{\pm 1\}$, there exists a unique $\alpha$ which satisfies the latter relation.
  \end{rmk}

  \begin{rmk}
  \label{rmk<>from()}
  Associated to any $\Gamma$-inner product  space $\HH$, there is an inner product (in the ordinary sense) defined by
  \[
  (v,w) :=\begin{cases}
  0& \text{ if }|v|\neq |w|,\\
  \alpha(a)\langle v,w\rangle& \text{ if }|v|=|w|=a\text{ where }a\in\Gamma.
  \end{cases}
  \]
  The vector space $\HH$, equipped with $(\cdot,\cdot)$, is indeed a pre-Hilbert space in the usual sense. Thus we can consider the completion of $\ccH$ with respect to the norm $\|v\|:=(v,v)^{\frac{1}{2}}$. By reversing the process of obtaining $(\cdot,\cdot)$ from $\lag\cdot,\cdot\rag$, we obtain a $\Gamma$-inner product on the completion of $\ccH$.

\end{rmk}

  \begin{dfn}
Let $\HH$ be a $\Gamma$-inner product space, and let 
$T\in\mathrm{End}_\C(\HH)$. 
We define the \emph{adjoint} $T^\dagger$ of $T$  
as follows. If 
$T\in\mathrm{End}_\C(\HH)_a$ for some $a\in\Gamma$, we 
define $T^\dagger$ by
  \[
  \langle v,Tw\rangle=\beta(a,b)\langle T^\dagger v,w\rangle,
  \]
  where $v\in\mathscr H_b$.  We then extend the assignment $T\mapsto T^\dagger$ to a conjugate-linear map on $\mathrm{End}_\C(\HH)$.
  \end{dfn}

Now let $\ccH$ be a $\Gamma$-inner product space, and let
$(\cdot,\cdot)$ be the ordinary 
inner product associated to $\ccH$ as in Remark \ref{rmk<>from()}. Then
for a linear map $T:\HH\to \HH$ 
we can define an adjoint 
with respect to $(\cdot,\cdot)$, by the relation
  \[
  (Tv,w)=(v,T^*w)\qquad\text{ for every }v,w\in\HH.
  \] 
A straightforward calculation yields  
\[
  T^*=\overline{\alpha(|T|)}T^\dagger.
  \]  
It follows that $T^{\dagger \dagger}=T$. Furthermore, $(ST)^\dagger=\beta(|S|,|T|)T^\dagger S^\dagger$.

We are now ready to define unitary representations of $\Gamma$-Harish-Chandra pairs. 
The definition of a unitary representation of a
$\Gamma$-Harish-Chandra pair is a natural extension of the one for the $\Z_2$-graded case. First recall that  
for a unitary representation $(\pi,\ccH)$ of a Lie group $G$ on a Hilbert space $\ccH$, we denote the space of $C^\infty$ vectors by $\ccH^\infty$. Thus 
the vector space $\ccH^\infty$ consists of all vectors $v\in \ccH$ for which the map $G\to \ccH$, $g\mapsto \pi(g)v$ is smooth. Furthermore, given $x\in\Lie(G)$ and $v\in\HH$, we set \[
\dd\pi(x):=\lim_{t\to 0}\frac{1}{t}\left(
\pi(\exp(tx))v-v\right),\]
whenever the limit exists. We denote the domain of the unbounded operator $\dd\pi(x)$ by $\cD(\dd\pi(x))$. The unbounded operator $-i\dd\pi(x)$ is the self-adjoint generator of  the one-parameter unitary representation $t\mapsto \pi(\exp(tx))$. For a comprehensive exposition of the theory of unitary representations and relevant facts from the theory of unbounded operators, see \cite{Warner}.

\begin{dfn}
\label{urep}
A \emph{smooth unitary representation} of  a 
$\Gamma$-Harish-Chandra pair $(G_0,\mathfrak g)$ is a triple $(\pi,\rho^\pi,\HH)$ that satisfies the following properties:

\begin{enumerate}

\item [(R1)] $(\pi,\HH)$ is a smooth unitary representation of the Lie group $G_0$ on the $\Gamma$-graded Hilbert space $\HH$
by operators $\pi(g)$, $g\in G_0$, which preserve the $\Gamma$-grading.

\item [(R2)] $\rho^\pi:\mathfrak g\to \mathrm{End}_{\mathbb C}(\mathscr H^\infty)$ is a representation of the $\Gamma$-Lie superalgebra $\mathfrak g$.

\item [(R3)] $\rho^\pi(x)=\dd\pi(x)\big|_{\mathscr H^\infty}$ for  $x\in \mathfrak g_{
   0}$.

\item [(R4)] $\rho^\pi(x)^\dagger=-\rho^\pi(x)$
    for  $x\in \mathfrak g$.
        
\item [(R5)]  $\pi(g)\rho^\pi(x)\pi(g)^{-1} = \rho^\pi(\mathrm{Ad}(g)x)$ for $g\in G_0$ and $x\in \g g$.

\end{enumerate}
\end{dfn}
Unitary representations of a $\Gamma$-Harish-Chandra pair form a category
$\Rep=\Rep(G_0,\g g)$.
 A morphism in this category from $(\pi,\rho^\pi,\ccH)$ to $(\sigma,\rho^\sigma,\ccK)$ is a bounded linear map $T:\ccH\to\ccK$ which respects the $\Gamma$-grading and satisfies $T\pi(g)=\sigma(g)T$ for $g\in G_0$ (from which it follows  that $T\ccH^\infty\sseq \ccK^\infty$) and
    $T\rho^\pi(x)=\rho^\sigma(x)T$ for $x\in\g g$.
\begin{rmk}
At first glance, it seems that the definition of a unitary representation of a $\Gamma$-Harish-Chandra pair depends on the choice of $\alpha$. Nevertheless, it is not difficult to verify that 
for two coboundaries $\alpha,\alpha'$ which satisfy \eqref{EqDfalphb}, the corresponding categories
$\Rep_\alpha$ and $\Rep_{\alpha'}$ are isomorphic. Indeed if $\chi:\Gamma\to \{\pm1\}$ is a group homomorphism such that $\alpha'=\chi\alpha$, then 
we can define a functor $\mathscr F:\Rep_\alpha\to\Rep_{\alpha'}$ which maps 
$(\pi,\rho^\pi,\HH)$ to $(\pi',\rho^{\pi'},\HH')$ where $\HH:=\HH'$
(but the $\Gamma$-inner product of $\HH'$ is defined by
$
\lag v,v \rag_{\HH'}=\chi(a)\lag v,v\rag_{\HH}
$ for $v\in\ccH_a$ where $a\in\Gamma$), $\pi':=\pi$, and $\rho^{\pi'}(x):=\chi(a)\rho^\pi(x)$ for $x\in\g g_a$ and $a\in\Gamma$. The functor
$\mathscr F$ is defined to be identity on morphisms. That is, 
for a morphism $T:(\pi,\rho^\pi,\ccH)\to (\sigma,\rho^\sigma,\ccK)$ we set $\mathscr F(T):=T$. It is straightforward to see that
with 
$(\pi',\rho^{\pi'},\ccH')$ and $(\sigma',\rho^{\sigma'},\ccK')$ defined as above, the map
$T:(\pi',\rho^{\pi'},\ccH')\to(\sigma',\rho^{\sigma'},\ccK')$ is still an intertwining map.
The main point is that $\rho^{\pi'}$ and 
$\rho^{\sigma'}$ are obtained from $\rho^\pi$ and $\rho^\sigma$ via scaling by the same scalar. The inverse of $\mathscr F$ is defined similarly.
\end{rmk}

\section{The stability theorem}
We now proceed towards the statment and proof of the Stability Theorem. 
In what follows, we will need the following technical definition (see \cite{NeSaMZ}).
\begin{dfn}\label{prerep def}
Let $(G_0,\mathfrak g)$ be a $\Gamma$-Harish-Chandra pair. By a \emph{pre-representation} of $(G_0,\mathfrak g)$, we mean a 4-tuple $(\pi,\ccH,\ccB,\rho^{\ccB})$ that satisfies the following conditions:
\begin{enumerate}

\item [(PR1)] $(\pi,\HH)$ is a smooth unitary representation of the Lie group $G_0$ on the $\Gamma$-graded Hilbert space $\HH$
by operators $\pi(g)$, $g\in G_0$, which preserve the $\Gamma$-grading.

\item [(PR2)] $\ccB$ is a dense, $G_0$-invariant, and $\Gamma$-graded subspace of $\mathscr H$ that is contained in $\bigcap_{x\in \mathfrak g_0}\cD({\dd\pi}(x))$.
     \item [(PR3)] $\rho^\ccB:\mathfrak g\to \mathrm{End}_{\mathbb C}(\ccB)$ is a representation of the $\Gamma$-Lie superalgebra $\mathfrak g$.
    \item [(PR4)] $\rho^\ccB(x)={\dd\pi}(x)\big|_{\ccB}$ and $\rho^\ccB(x)$ is essentially skew adjoint for $x\in \mathfrak g_{0}$.
    \item [(PR5)] $\rho^\ccB(x)^\dagger=-\rho^\ccB(x)$
for $x\in \mathfrak g$.
    \item [(PR6)] $\pi(g)\rho^\ccB(x)\pi(g)^{-1} = \rho^\ccB(\mathrm{Ad}(g)x)$ for $g\in G_0$ and $x\in \g g$.    \end{enumerate}
\end{dfn} 
\begin{rmk}
It is shown in \cite[Rem. 6.5]{NeSaMZ} that 
(PR2) and (PR3) imply that $\ccB\sseq\HH^\infty$. 
\end{rmk}
Set 
\[
\Gamma_\eev;=\{a\in\Gamma\,:\,\mathsf{b}(a,a)=\eev\}\text{ and }\Gamma_\ood:=
\{a\in\Gamma\,:\,\mathsf{b}(a,a)=\ood\}
\]
\begin{theorem}
\label{stability theorem}
{\upshape \textbf{(Stability Theorem)}}
Let $(\pi,\ccH,\ccB,\rho^\ccB)$ be a pre-representation of a $\Gamma$-Harish-Chandra pair $(G_0,\mathfrak g)$. Assume that for every $a\in\Gamma_\eev\backslash\{0\}$ we have \[
\mathfrak g_{a}=\sum_{b,c\in\Gamma_\ood,bc=a}[\mathfrak g_{b},\mathfrak g_{c}].
\]
 Then there exists a unique extension of $\rho^\ccB$ to a linear map 
$\rho^\pi:\mathfrak g\rightarrow \mathrm{End}_{\mathbb C}(\mathscr H^\infty)$ such that $(\pi,\rho^\pi,\mathscr H)$ is a smooth unitary representation of $(G_0,\mathfrak g)$.
\end{theorem}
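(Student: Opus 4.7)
The plan is to extend $\rho^\ccB$ to $\HH^\infty$ degree by degree using the decomposition $\Gamma=\{0\}\sqcup(\Gamma_\eev\setminus\{0\})\sqcup\Gamma_\ood$. On $\g g_0$ we take $\rho^\pi:=\dd\pi|_{\HH^\infty}$. On $\g g_a$ with $a\in\Gamma_\ood$ we exploit the fact that $[x,x]\in\g g_0$ is nontrivial (since $\beta(a,a)=-1$) and perform a functional-analytic ``square trick'' modelled on the Neeb--Salmasian argument for $n=1$. On $\g g_a$ with $a\in\Gamma_\eev\setminus\{0\}$ one has $[x,x]=0$, so the square trick offers no information, and the perfectness hypothesis is invoked precisely to transport the extension from the odd part via graded commutators.

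For $a\in\Gamma_\ood$ and $x\in\g g_a$, set $T:=\rho^\ccB(x)$. Combining $T^\dagger=-T$ from (PR5) with $T^*=\overline{\alpha(a)}T^\dagger$ and $\alpha(a)\in\{\pm i\}$, one finds a unit complex scalar $c_a$ such that $U:=c_aT$ is skew-symmetric on $\ccB$ with
\[
U^2=\tfrac{c_a^2}{2}\,\rho^\ccB([x,x])=\tfrac{c_a^2}{2}\,\dd\pi([x,x])\big|_\ccB.
\]
Since $[x,x]\in\g g_0$, by (PR4) the operator $\dd\pi([x,x])|_\ccB$ is essentially skew-adjoint, and the choice $c_a^2\in\{\pm i\}$ makes $U^2$ essentially self-adjoint on $\ccB$. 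Nelson's square trick applied to $iU$ forces $U$ to be essentially skew-adjoint, so its closure $\bar U$ generates a one-parameter unitary group $\{e^{t\bar U}\}_{t\in\R}$. The covariance relation (PR6), which survives closure, together with the smoothness of the $G_0$-action on $\HH^\infty$, then yields that $\bar U$ preserves $\HH^\infty$; this is the precise analogue of the Stability Theorem in \cite{NeSaTG}. We set $\rho^\pi(x):=c_a^{-1}\bar U|_{\HH^\infty}$, which is the only extension compatible with (R4).

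For $a\in\Gamma_\eev\setminus\{0\}$, write $x=\sum_i[y_i,z_i]$ with $y_i\in\g g_{b_i}$, $z_i\in\g g_{c_i}$, $b_i,c_i\in\Gamma_\ood$, and $b_ic_i=a$ (possible by the perfectness hypothesis), and define
\[
\rho^\pi(x):=\sum_i\bigl(\rho^\pi(y_i)\rho^\pi(z_i)-\beta(b_i,c_i)\rho^\pi(z_i)\rho^\pi(y_i)\bigr)\quad\text{on }\HH^\infty.
\]
Independence from the chosen decomposition reduces on the dense subspace $\ccB$ to the identity $\rho^\ccB(\sum_i[y_i,z_i])=\rho^\ccB(x)$, which holds by (PR3); density of $\ccB$ in $\HH^\infty$ in its natural Fr\'echet topology, together with the continuity of the odd extensions constructed above, then extends the agreement to all of $\HH^\infty$. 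The axioms of Definition \ref{urep} follow: (R3) by construction on $\g g_0$; (R4) for $a\in\Gamma_\ood$ from Step 1 and for $a\in\Gamma_\eev\setminus\{0\}$ from a short calculation using $(ST)^\dagger=\beta(|S|,|T|)T^\dagger S^\dagger$ and the symmetry of $\beta$; (R2) reduces to bracket identities that already hold on $\ccB$ by (PR3) and extend by density; (R5) passes from $\ccB$ to $\HH^\infty$ in the same way.

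The main obstacle is the assertion, within the odd case, that the skew-adjoint closure $\bar U$ preserves $\HH^\infty$. In the classical super setting this is the heart of the Stability Theorem and requires a careful argument combining covariance, the smoothing properties of $\pi$, and analytic-vector estimates. Outside this step, the perfectness hypothesis on $\g g_a$ for $a\in\Gamma_\eev\setminus\{0\}$ is precisely what spares us a second, more serious obstacle: for such degrees the square trick is empty (since $[x,x]=0$), so a direct functional-analytic extension would require substantially more input, which the hypothesis allows us to bypass entirely.
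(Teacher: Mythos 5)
Your overall architecture matches the paper's: handle $\g g_0$ and the odd degrees first, then define $\rho^\pi$ on $\g g_a$ for $a\in\Gamma_\eev\setminus\{0\}$ by the graded-commutator formula, using the perfectness hypothesis exactly where you say it is needed. But there are two genuine gaps. First, for $a\in\Gamma_\ood$ you attempt to redo the functional analysis (square trick, essential skew-adjointness of $\bar U$, and then invariance of $\HH^\infty$ under $\bar U$) from scratch, and you explicitly leave the hardest step --- that the closure preserves $\HH^\infty$ --- as an acknowledged black box. The paper sidesteps this entirely by observing that $\g g_0\oplus\g g_a$ is an ordinary ($\Z_2$-graded) Lie superalgebra, that $\HH$ acquires a $\Z_2$-grading via $\HH_\eev:=\bigoplus_{\mathsf b(a,b)=\eev}\HH_b$, $\HH_\ood:=\bigoplus_{\mathsf b(a,b)=\ood}\HH_b$, and that the restriction of the data is a pre-representation of the $\Z_2$-graded Harish--Chandra pair $(G_0,\g g_0\oplus\g g_a)$; one then quotes the known Stability Theorem of Neeb--Salmasian verbatim to get $\rho^{\pi,a}$ on all of $\HH^\infty$. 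Without this reduction (or a full reproof of the classical theorem), your odd-degree step is incomplete.

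Second, your well-definedness and axiom-verification arguments for $a\in\Gamma_\eev\setminus\{0\}$ rest on ``density of $\ccB$ in $\HH^\infty$ in its natural Fr\'echet topology, together with the continuity of the odd extensions.'' Neither claim is available: (PR2) only gives density of $\ccB$ in $\HH$ with respect to the Hilbert norm, not in the Fr\'echet topology of $\HH^\infty$, and the operators $\rho^{\pi}(y_i)$ are unbounded, so agreement on a norm-dense subspace does not propagate to $\HH^\infty$ by continuity. The paper avoids this by a duality computation: for $w\in\ccB_s$ and $v\in\HH^\infty$ one moves all operators onto $w$ using $\dagger$, obtaining $\langle w,\rho^\pi(x)v\rangle=-\beta(a,s)\langle\sum_j\rho^\ccB([y_j,z_j])w,v\rangle$, so that vanishing of $\sum_j[y_j,z_j]$ kills the pairing against the norm-dense subspace $\ccB$, which determines $\rho^\pi(x)v$; the bracket identity (R2) and uniqueness are proved by the same device. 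You should replace your density/continuity appeal with this adjoint argument (or justify the Fr\'echet density, which does not follow from the hypotheses).
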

\begin{proof}
For every $a\in\Gamma_\ood$ the direct sum $\g g_0\oplus\g g_a$ is a Lie superalgebra (in the $\Z_2$-graded sense). We define a $\Z_2$-grading
of $\HH$ by
\[
\ccH_\eev:=\bigoplus_{\mathsf{b}(a,b)=\eev}\HH_b\ \text{ and }
\ \ccH_\ood:=\bigoplus_{\mathsf{b}(a,b)=\ood}\HH_b
,\]
and then we use the Stability Theorem in the $\Z_2$-graded case (see \cite[Thm 6.14]{NeSaMZ})
to obtain that 
there exists a unique $\Gamma$-Lie superalgebra homomorphism $\rho^{\pi,a}:\mathfrak g_{0}\oplus\mathfrak g_a\to \mathrm{End}_{\mathbb C}(\HH^\infty)$, such that $(\pi,\rho^{\pi,a},\ccH)$ is a unitary representation of the ($\Z_2$-graded) Harish-Chandra pair $(G_0,\g g_0\oplus\g g_a)$. Since the maps $\rho^{\pi,a}$ agree on $\g g_0$, they give rise to a linear map
\[
\mathfrak g_{0}\oplus(\bigoplus_{a\in\Gamma_\ood}\mathfrak g_{a})\to \mathrm{End}_{\mathbb C}(\HH^\infty).
\]
It remains to obtain a suitable extension of the latter map to $\mathfrak g_a$ for every $a\in\Gamma_\eev\backslash\{0\}$.

Fix $x\in\mathfrak g_{a}$, $a\in\Gamma_\eev\backslash\{ 0\}$. Then
we can write $x=\sum_j[y_j,z_j]$
such that $y_j\in \g g_{b_j}$, $z_j\in\g g_{c_j}$, where $b_j,c_j\in\Gamma_\ood$, and 
$b_jc_j=a$. 
For every $v\in \ccH^\infty$, we define  
\begin{equation}
\label{rhoPIDFN}
\rho^\pi(x)v:=
\sum_j\big(\rho^{\pi,b_j}(y_j)\rho^{\pi,c_j}(z_j)v-\beta(b_j,c_j)\rho^{\pi,c_j}(z_j)\rho^{\pi,b_j}(y_j)v\big)
.\end{equation}
Let us first verify that $\rho^\pi(x)v$ is well-defined. To this end, it suffices to show that if  $\sum_j[y_j,z_j]=0$, then the right hand side of 
\eqref{rhoPIDFN} vanishes. 
To verify the latter assertion, observe that for every 
$w\in\ccB_s$, $s\in\Gamma$, we have
\begin{align*}
\langle w,\rho^\pi(x)v \rangle&=\sum_j
\beta(b_j,s)\beta(c_j,sb_j)\langle\rho^{\pi,c_j}(z_j)^\dagger\rho^{\pi,b_j}(y_j)^\dagger w,v\rangle\\
&-\sum_j
\beta(c_j,s)\beta(b_j,sc_j)\beta(b_j,c_j)
\langle
\rho^{\pi,b_j}(y_j)^\dagger\rho^{\pi,c_j}(z_j)^\dagger w,v\rangle\\
&=\beta(a,s)\left\langle
\sum_j\beta(b_j,c_j)\rho^{\pi,c_j}(z_j)^\dagger\rho^{\pi,b_j}(y_j)^\dagger w-\sum_j
\rho^{\pi,b_j}(y_j)^\dagger\rho^{\pi,c_j}(b_j)^\dagger w
\,,v\right\rangle\\
&=-\beta(a,s)\left\langle \sum_j \rho^\ccB([y_j,z_j])w,v\right\rangle =0.
\end{align*}
The assertion now follows from density of $\ccB$ in $\HH$.

It remains to show that given any $x_a\in\g g_a$ and $y_b\in\g g_b$, where $a,b\in\Gamma$,  we have
\begin{align}\label{abc}
\rho^\pi([x_a,y_b])v=\big(\rho^\pi(x_a)\rho^\pi(y_b)
-\beta(a,b)\rho^\pi(y_b)\rho^\pi(x_a)\big)v
\quad\text{ for }v\in\HH^\infty
.
\end{align}
To verify the last equality note that for every  
$w\in\ccB_s$ where $s\in\Gamma$, 
\begin{align*}
\langle
w,\rho^\pi(x_a)&\rho^\pi(y_b)v
-\beta(a,b)\rho^\pi(y_b)\rho^\pi(x_a)v 
\rangle\\
&
=\beta(a,s)\beta(as,b)\langle \rho^\pi(y_b)^\dagger\rho^\pi(x_a)^\dagger w,v
\rangle
-
\beta(b,s)\beta(s,a)
\langle 
\rho^\pi(x_a)^\dagger\rho^\pi(y_b)^\dagger w,v
\rangle\\
&=-\beta(a,s)\beta(b,s)
\Big\langle 
\rho^\ccB(x_a)\rho^\ccB(y_b)w-\beta(a,b)
\rho^\ccB(y_b)\rho^\ccB(x_a)w,v
\Big\rangle\\
&=
\beta(a,s)\beta(b,s)
\Big\langle
\rho^\ccB\big([x_a,y_b]\big)^\dagger w,v
\Big\rangle=
\Big\langle
w,\rho^\pi\big([x_a,y_b]\big) v
\Big\rangle
.
\end{align*}
Again density of $\ccB$ in $\HH$ implies that both sides of \eqref{abc} are equal. Finally, uniqueness of $\rho^\pi(x)$ can be proved by the same technique.
\end{proof}

\begin{ex}
\label{badexample}
For $n>1$, one cannot expect the Stability Theorem to hold without any condition on $\g g$. For example, let  $\Gamma=\Z_2\times \Z_2$, let $G_0$ be the trivial group,  and let $\g g=\g g_{\eev\eev}\oplus
\g g_{\eev\ood}
\oplus
\g g_{\ood\eev}
\oplus 
\g g_{\ood\ood}
$ where 
\[
\g g_{\ood\ood}:=\R\text{ and }
\g g_{\eev\eev}:=
\g g_{\eev\ood}:=
\g g_{\ood\eev}:=\{0\}.
\]
Then a pre-representation of $(G_0,\g g)$ is the same  (up to scaling)  as a symmetric operator defined on a dense subspace of a Hilbert space, whereas a unitary representation of $(G_0,\g g)$ is the same 
 (up to scaling) 
as a \emph{bounded} self-adjoint operator. Therefore a pre-representation does not necessarily extend to a unitary representation.   
\end{ex}

\section{The GNS representation}

Our goal in this section is to extend the GNS construction of \cite{NeSaTG} to the setting of $\Gamma$-Harish-Chandra pairs. We begin by outlining some generalities about this construction. 
 Let $(G_0,\g g)$ be a $\Gamma$-Harish-Chandra pair, and let
$\cG:=(G_0,\mathcal O_{G_0})$ denote the $\Gamma$-Lie supergroup corresponding to  $(G_0,\g g)$. 
 Our main goal, to be established in 
Theorem \ref{GNS},  
  is to construct a correspondence between unitary representations of 
$(G_0,\g g)$ with a cyclic vector, and smooth positive definite functions on 
$\cG$. 
The suitable definition of a unitary representation with a cyclic vector is given in 
Definition \ref{cyclicvcc}. The main subtlety
is to define positive definite functions on $\cG$. To this end, we use the method developed in
\cite{NeSaTG} for the classical super case, which we describe below.
Recall that the $\Gamma$-superalgebra
$C^\infty(\cG)$
 of smooth functions on
$\cG$  is isomorphic to
\[\Hom_{\g g_{0}}
\big(\mathfrak{U}(\g g_\C),C^\infty(G_0;\mathbb C)\big),
\]
where $\g g_0$ acts on $C^\infty(G_0;\C)$ by left invariant differential operators. 
We realize elements of the latter algebra as functions on a semigroup $\cS$ which is equipped with an involution. Then we use the abstract definition of a positive definite function on a semigroup with an involution (see Definition
\ref{dfn-posddd}). 

Given a unitary representation of $(G_0,\g g)$ with a cyclic vector, the matrix coefficient of that
cyclic vector will be a positive definite element of $C^\infty(\cG)$. Conversely, from a positive definite $f\in C^\infty(\cG)$,
we construct a unitary representation 
of $(G_0,\g g)$ 
by considering the reproducing kernel Hilbert space associated to $f$, now realized as a function on $\cS$. The Lie group $G_0$ has a canonical action  
on the reproducing kernel Hilbert space.
However, the $\Gamma$-Lie superalgebra $\g g$
acts on a dense subspace of the latter Hilbert space, which is in general strictly smaller than the space of smooth vectors for the $G_0$-action. 
 The main part of the proof of Theorem \ref{GNS} is to show that  the action of  $\g g$ is well defined on the entire space of smooth vectors. 
For this we need 
 Theorem \ref{stability theorem}, 
 which is the reason for presence of 
 the condition 
\eqref{eq-condgagbb} in Theorem \ref{GNS}.

Set $\g g_\C^{}:=\g g\otimes_\R\C$. Let $\mathfrak{U}(\g g_\C^{})$ be the universal enveloping algebra of $\g g_\C^{}$, that is,  the quotient $T(\g g_\C^{}) / I$, where $T(\g g_\C^{})$ denotes the tensor algebra of $\g g_\C^{}$ in the category $\VecG$, and 
$I$ denotes the two-sided   ideal of  $T(\g g_\C^{})$ generated by elements of the form
\[
x\otimes y-\beta(|x|,|y|)y\otimes x-[x,y],\quad\text{ for homogeneous }x,y\in\g g_\C^{}.
\]
See \cite[Sec. 2.1]{Nishiyama90} for further details.

Let $x \mapsto x^*$ be the (unique) conjugate-linear map 
on $\g g_\C^{}$ that is defined by the relation   
\begin{align}\label{antilinear map}
              x^*:=-\overline{\alpha(a)}x,\qquad \text{ for every }x\in \mathfrak g_a,\ a\in\Gamma.
         \end{align}
We extend the map $x\mapsto x^*$ to  a conjugate-linear anti-automorphism of the algebra $\mathfrak{U}(\g g_\C^{})$. Thus $(D_1D_2)^*=D_2^*D_1^*$ for every $D_1,D_2\in\mathfrak{U}(\g g_\C^{})$. Such an extension is possible because of  $*$-invariance of $I$. 
We now define a  monoid
\[
\semS:=G_0\ltimes \mathfrak{U}(\g g_\C^{}),
\] with a multiplication given by
\[
(g_1,D_1)(g_2,D_2):=\big(g_1g_2,(\Ad(g_2^{-1})(D_1)) D_2\big).
\]
The neutral element of $\semS$ is $1_{\semS}:=(1_{G_0},1_{\mathfrak{U}(\mathfrak g_{\mathbb C})})$. The map
\[
\semS\to\semS\ ,\  
(g,D)  \mapsto (g,D)^*:=\big(g^{-1},\Ad(g)(D^*)\big)
\]
is an involution of $\semS$.

 The proof of the latter assertion is similar to the one in the $\Z_2$-graded case (see \cite{Koszul}, \cite{Kostant}, \cite[Thm 
5.5.2]{NeSaTG}).

 Next, for every $f\in C^\infty(\cG)$, we define a map 
\[        
\check{f}:\semS \to\C\ ,\ 
(g,D) \mapsto f(D)(g).
\]
Also, for any $a\in\Gamma$ we set 
$\semS_a:=\{(g,D)\in \semS\,:\,
|D|=a\}$.

\begin{dfn}
\label{dfn-posddd}
An $f\in C^\infty(\cG)$ is called \emph{positive definite} if it satisfies the following two conditions:
\begin{enumerate}
\item[(i)] $\check{f}(g,D)=0$ unless $(g,D)\in \semS_0$. 
\item[(ii)] $\sum_{1\leq i,j\leq n}\overline{c_i}c_j\check{f}(s_i^*s_j)\geq 0,\qquad 
\text{ for all } n\geq 1, c_1,\cdots,c_n\in \mathbb C, s_1,\cdots,s_n\in \semS$.
\end{enumerate}
\end{dfn}

Given a unitary representation $(\pi,\rho^\pi,\HH)$ of
$G$, for any two vectors $v,w\in\ccH$ we define the matrix coefficient $\varphi_{v,w}$ to be the map
\[
\varphi_{v,w}:\semS\to \C\ ,\ 
\varphi_{v,w}(g,D):=(\pi(g)\rho^\pi(D)v,w).
\]

\begin{prp}
\label{prp:v=wp}
Let $(\pi,\rho^\pi,\HH)$ be a smooth unitary representation of $(G_0,\mathfrak g)$, and let $v,w\in \HH^\infty$ be homogeneous vectors such that $|v|=|w|$. Then there exists an $f\in C^\infty(\cG)$ such that $\check{f}=\varphi_{v,w}$. Furthermore, $\check{f}(s)=0$ unless $s\in\semS_0$. If $v=w$ then $\check{f}$ is positive definite.
\end{prp}
\begin{proof}
Similar to \cite[Prop. 6.5.2]{NeSaTG}.
\end{proof}

We are now ready to describe the  GNS construction for unitary representations of $\Gamma$-Harish-Chandra pairs. It is an extension of the one  given in \cite[Sec. 6]{NeSaTG} in the $\Z_2$-graded case, and therefore we will skip the proof details.

Let $(\pi,\rho^\pi,\HH)$ be a smooth unitary representation of $(G_0,\mathfrak g)$. One can construct a 
\emph{$*$-representation}
$\widetilde{\rho^\pi}$ of the monoid $\semS$ by setting
\begin{align*}
\widetilde{\rho^\pi}:\semS \to 
\mathrm{End}_{\mathbb C}(\HH^\infty)\ ,\           (g,D) \mapsto \pi(g)\rho^\pi(D).
\end{align*}
Being a $*$-representation means that $\widetilde{\rho^\pi}(s^*)=
\widetilde{\rho^\pi}(s)^*$ for every $s\in \semS$, and in particular        
\begin{align}
\label{*representation}
(\widetilde{\rho^\pi}(s)v,w)=(v,\widetilde{\rho^\pi}(s^*)w).
\end{align}
By Proposition \ref{prp:v=wp}, for every matrix coefficient $\varphi_{v,v}$, where $v\in\ccH_0$, there exists a positive definite $f\in C^\infty(\cG)$ such that $\varphi_{v,v}=\check{f}$. 

Conversely, given a positive definite function $f\in C^\infty(\cG)$, one can associate a $*$-representation of $\semS$ to $f$ as follows. 
Set $\psi:=\check{f}$, and for every $s\in \semS$ let
$\psi_s:\semS\rightarrow \mathbb C$ be the map given by 
$\psi_s(t):=\psi(ts)$. We also set
\[\mathfrak D_{\psi}:=\spn_{\mathbb C}\left\{\psi_s\,:\,s\in \semS\right\}.
\]
Note that $\mathfrak D_\psi$ is a 
$\Gamma$-graded vector space of complex-valued functions on $\semS$, where the homogeneous parts of the $\Gamma$-grading are defined by
\[
\mathfrak D_{\psi,a}:=\{h\in\mathfrak D_\psi\,:\,
h(s)=0\text{ unless }s\in\semS_a\}.
\]
The space $\mathfrak D_\psi$
can be equipped with a sesquilinear form that is uniquely defined by the relation $(\psi_t,\psi_s):=\psi(s^*t)$. 
The completion $\HH_{\psi}$ of the resulting pre-Hilbert space is the \emph{reproducing kernel Hilbert space} that corresponds to the kernel 
\[     
K:\semS\times \semS \to \mathbb C\ ,\ (t,s)\mapsto \psi(ts^*).
\]
In other words, with respect to the inner product $(\cdot,\cdot)$ on $\HH_\psi$, we have
\[
h(s)=(h,K_s)\quad  \text{for }h\in\HH_{\psi},\ s\in \semS.
\]
There is a natural $*$-representation 
$\widetilde{\rho_\psi}:\semS \to\mathrm{End}_\C(\mathfrak D_\psi)$ by right translation, given by
\[
(\widetilde{\rho_\psi}(s)h)(t) :=h(ts)\quad\text{for }s,t\in S,\ h\in\mathfrak D_{\psi}.
\]
If $s\in\semS$ satisfies $ss^*=s^*s=1_\semS$, then $\widetilde{\rho_\psi}(s):\mathfrak D_{\psi}\rightarrow\mathfrak D_{\psi}$ is an isometry and extends uniquely to a unitary operator on $\HH_\psi$. Using the latter fact for elements of the form  $(g,1_{\mathfrak U(\g g_\C)})$ where $g\in G_0$, one obtains a unitary representation of $G_0$ on $\HH_\psi$ (in fact the vectors $K_s$, $s\in \semS$, have smooth $G_0$-orbits). 
Setting \[
\ccB:=\mathfrak D_\psi,\ \ccH:=\ccH_\psi,\ \rho^\ccB(x):=\widetilde{\rho_\psi}(1_{G_0},x)\text{ for }x\in\g g,
\text{ and } \pi(g):=\widetilde{\rho_\psi}(g,0),
\] we obtain a pre-representation of $(G_0,\g g)$.
Theorem \ref{stability theorem} implies that this pre-representation corresponds to a unique unitary representation of $(G_0,\g g)$.

\begin{dfn}
\label{cyclicvcc}
By a \emph{cyclic vector}
in a unitary representation $(\pi,\rho^\pi,\HH)$ we mean
a vector $v\in\HH$ such that $\widetilde{\rho^\pi}(\semS)v$ is dense in $\HH$. 
\end{dfn}

The above construction results in Theorem \ref{GNS} below.

\begin{theorem}
\label{GNS}
Let $(G_0,\g g)$ be a $\Gamma$-Harish-Chandra pair
such that
 \begin{equation}
 \label{eq-condgagbb}
\mathfrak g_{a}=\sum_{b,c\in\Gamma_\ood,bc=a}[\mathfrak g_{b},\mathfrak g_{c}].
\end{equation}
Also, let $f\in C^\infty(\cG)$ be  positive definite. 
\begin{enumerate}
\item[\rm (i)]
There exists a unitary representation $(\pi,\rho^\pi,\HH)$
of $(G_0,\g g)$ 
 with a cyclic vector $v_0\in\HH_0$ such that $\check{f}=\varphi_{v_0,v_0}$. 
\item[\rm (ii)] 
Let $(\sigma,\rho^\sigma,\ccK)$ be another unitary representation of $(G_0,\g g)$ with a cyclic vector $w_0\in\ccK_0$ such that   $\check{f}=\varphi_{w_0,w_0}$. Then $(\pi,\rho^\pi,\HH)$ and $(\sigma,\rho^\sigma,\ccK)$ are unitarily equivalent via an intertwining operator 
that maps $v_0$ to $w_0$.

\end{enumerate}

\end{theorem}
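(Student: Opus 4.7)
The plan is to mirror the strategy of \cite[Sec.~6]{NeSaTG}, substituting in the $\Gamma$-graded Stability Theorem \ref{stability theorem} at the critical step. For part (i), I would start with $\psi:=\check{f}$ and form $\mathfrak D_\psi=\spn_\C\{\psi_s:s\in\semS\}$ with its canonical $\Gamma$-grading $\mathfrak D_{\psi,a}$. The sesquilinear form $(\psi_t,\psi_s):=\psi(s^*t)$ extends to $\mathfrak D_\psi$, and positive definiteness of $f$ (Definition \ref{dfn-posddd}(ii)) guarantees that it is positive semi-definite; after modding out the null space I obtain a pre-Hilbert space and complete it to $\HH_\psi$. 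The vanishing condition in Definition \ref{dfn-posddd}(i) combined with the $\Gamma$-grading of $\mathfrak D_\psi$ yields the orthogonality $(\mathfrak D_{\psi,a},\mathfrak D_{\psi,b})=0$ for $a\neq b$, and a short calculation using the behavior of $\mathsf b$ under the involution $s\mapsto s^*$ shows properties (i)--(iii) of Definition \ref{dfn-prehilb} for the associated $\Gamma$-inner product $\lag\cdot,\cdot\rag_{\HH_\psi}$ (via the prescription in Remark \ref{rmk<>from()}).

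Next I would build a pre-representation in the sense of Definition \ref{prerep def}: set $\ccB:=\mathfrak D_\psi$, $\ccH:=\HH_\psi$, define $\pi(g):=\widetilde{\rho_\psi}(g,1_{\mathfrak U(\g g_\C)})$, and $\rho^\ccB(x):=\widetilde{\rho_\psi}(1_{G_0},x)$ for $x\in\g g$. The elements $(g,1_{\mathfrak U(\g g_\C)})$ satisfy $ss^*=s^*s=1_\semS$, hence $\pi(g)$ extends to a unitary operator; smoothness of the $G_0$-orbit of each $K_s$ and the density of $\{K_s\}$ give (PR1) and (PR2). The $*$-representation relation \eqref{*representation}, applied to $s=(1_{G_0},x)$, yields the adjoint identity (PR5); (PR4) follows for $x\in\g g_0$ by a standard one-parameter subgroup argument combined with the fact that $-i\dd\pi(x)$ is already self-adjoint; (PR3) is immediate from the algebra structure of $\mathfrak U(\g g_\C)$; and (PR6) follows from $\Ad$-equivariance built into the semigroup product on $\semS$.

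Now comes the step where condition \eqref{eq-condgagbb} is essential: invoking Theorem \ref{stability theorem}, the pre-representation extends uniquely to a smooth unitary representation $(\pi,\rho^\pi,\HH_\psi)$ of $(G_0,\g g)$. Take $v_0:=\psi_{1_\semS}=K_{1_\semS}\in\HH_{\psi,0}$. Then for $s=(g,D)\in\semS$ one computes
\[
\varphi_{v_0,v_0}(s)=(\pi(g)\rho^\pi(D)v_0,v_0)=(\widetilde{\rho^\pi}(s)K_{1_\semS},K_{1_\semS})=\psi(s)=\check f(s),
\]
and the orbit $\widetilde{\rho^\pi}(\semS)v_0$ contains every $K_s$ (by right translation), hence is dense, so $v_0$ is cyclic in the sense of Definition \ref{cyclicvcc}.

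For part (ii), I would apply Proposition \ref{prp:v=wp} to the cyclic vector $w_0$ of the second representation to see that $\varphi_{w_0,w_0}=\check f=\varphi_{v_0,v_0}$. Then the standard GNS argument produces a linear map
\[
T_0:\widetilde{\rho^\pi}(\semS)v_0\to\widetilde{\rho^\sigma}(\semS)w_0\ ,\ \widetilde{\rho^\pi}(s)v_0\mapsto\widetilde{\rho^\sigma}(s)w_0,
\]
which is well-defined and isometric because $(\widetilde{\rho^\pi}(s)v_0,\widetilde{\rho^\pi}(t)v_0)=\check f(t^*s)=(\widetilde{\rho^\sigma}(s)w_0,\widetilde{\rho^\sigma}(t)w_0)$. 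Cyclicity of $v_0$ and $w_0$ lets me extend $T_0$ by continuity to a unitary $T:\HH\to\ccK$ which intertwines the $G_0$-actions by construction and respects the $\Gamma$-grading (since $v_0,w_0$ are even and both $*$-representations preserve the grading). To check that $T$ intertwines $\rho^\pi$ and $\rho^\sigma$ on $\HH^\infty$, I would first verify the intertwining on the dense subspace $\widetilde{\rho^\pi}(\semS)v_0$, then use the uniqueness clause of Theorem \ref{stability theorem} applied to the pullback pre-representation $(\sigma\circ T,T^{-1}\ccK^\infty,T^{-1}\widetilde{\rho^\sigma}(\semS)w_0,\cdot)$ to conclude equality on all of $\HH^\infty$.

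The main obstacle is the same as in the classical case \cite{NeSaTG}: the natural action of $\g g$ via $\widetilde{\rho_\psi}$ is defined only on $\mathfrak D_\psi$, which need not coincide with $\HH_\psi^\infty$, so extending $\rho^\ccB$ to $\rho^\pi$ on all smooth vectors consistently and with the requisite bracket relations is nontrivial. This is exactly what the Stability Theorem resolves, and the ``perfectness'' hypothesis \eqref{eq-condgagbb} --- which allows every even graded piece to be reconstructed from brackets of odd pieces --- is what makes the theorem applicable in the $\Z_2^n$-graded setting, as Example \ref{badexample} shows it cannot be dispensed with in general.
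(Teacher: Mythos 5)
Your proposal is correct and follows essentially the same route as the paper: the paper's proof consists of the reproducing-kernel construction laid out in the paragraphs preceding Theorem \ref{GNS} (form $\mathfrak D_\psi$ from $\psi=\check f$, pass to $\HH_\psi$, assemble the pre-representation, and invoke Theorem \ref{stability theorem} under hypothesis \eqref{eq-condgagbb}), with the remaining details deferred to the argument of \cite[Thm 6.7.5]{NeSaTG}. Your identification of the cyclic vector as $K_{1_\semS}$ and the standard isometric-intertwiner argument for uniqueness match that template exactly.
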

\begin{proof}
The proof  is an extension of the argument of \cite[Thm 6.7.5]{NeSaTG}.
\end{proof}


\begin{thebibliography}{ZZZ}

\bibitem{Albert} 
A. Adrian Albert, 
\emph{Symmetric and alternate matrices in an arbitrary field. I}.
Trans. Amer. Math. Soc. 43 (1938), no. 3, 386--436.




\bibitem{Benk}
Georgia Benkart and Chanyoung Lee Shader and  Arun Ram,
\emph{Tensor product representations for orthosymplectic Lie superalgebras}. 
J. Pure Appl. Algebra 130 (1998), no. 1, 
1--48. 

\bibitem{CCTV}
Claudio Carmeli and Gianni Cassinelli and Alessandro Toigo and Veeravalli S. Varadarajan, \emph{Unitary representations of super Lie groups and applications to the classification and multiplet structure of super particles}. Comm. Math. Phys. 263 (2006), no.
1, 217--258.



\bibitem{Fioresi}
Claudio Carmeli, Lauren Caston,  
Rita Fioresi, 
\emph{Mathematical foundations of supersymmetry}.
EMS Series of Lectures in Mathematics. European Mathematical Society (EMS), Zürich, 2011. xiv+287 pp. 


\bibitem{CovoloPoncin2}
Tiffany Covolo and Janusz Grabowski and Norbert Poncin,\emph{
Splitting theorem for $\Z_2^n$-supermanifolds}. 
J. Geom. Phys. 110 (2016), 393--401. 





\bibitem{CovoloPoncin}
Tiffany Covolo and Janusz Grabowski and Norbert Poncin, \emph{The category of $\Z_2^n$-supermanifolds}. J. Math. Phys. 57 (2016), no. 7, 073503, 16 pp.


\bibitem{CovoloPoncin3}
Tiffany Covolo and
 Stephen Kwok, and Norbert Poncin, 
\emph{
Differential calculus on $\Z_2^n$-supermanifolds}. 
arXiv:1608.00961.

\bibitem{FQS}
Daniel Friedan and Zongan Qiu and and Stephen Shenker, \emph{Superconformal invariance in two dimensions and the tricritical Ising model}. Phys. Lett. B 151 (1985), no. 1, 37--43.

\bibitem{FSZ}
Sergio Ferrara and Carlos A. Savoy
and Bruno Zumino \emph{General massive multiplets in extended supersymmetry}. Phys. Lett. B 100:5 (1981), 393--398. MR 82f:81085 

\bibitem{GKO}
Peter Goddard and Adrian Kent and David I. Olive, \emph{Unitary representations of the Virasoro and super-Virasoro algebras}. Phys. 103 (1986), no. 1, 105--119.
 
 
\bibitem{Kaplansky}
Irving Kaplansky, 
\emph{Linear algebra and geometry. A second course}. Allyn and Bacon, Inc., Boston, Mass. 1969 xii+139 pp. 
 
\bibitem{Kostant}
Bertram Kostant,
\emph{Graded manifolds, graded Lie theory, and prequantization}. Differential geometrical methods in mathematical physics (Proc. Sympos., Univ. Bonn, Bonn, 1975), pp. 177--306. Lecture Notes in Math., Vol. 570, Springer, Berlin, 1977. 


\bibitem{Koszul}
Jean-Louis Koszul, 
\emph{Graded manifolds and graded Lie algebras}. Proceedings of the international meeting on geometry and physics (Florence, 1982), 71--84, Pitagora, Bologna, 1983. 

\bibitem{NeSaMZ}
Karl-Hermann Neeb and Hadi Salmasian,
\emph{Differentiable vectors and unitary representations of Fr\'{e}chet-Lie supergroups}. 
Math. Z. 275 (2013), no. 1-2, 419--451. 



\bibitem{NeSaTG}
Karl-Hermann Neeb and Hadi Salmasian,
\emph{Positive definite superfunctions and unitary representations of Lie supergroups.} 
Transform. Groups 18 (2013), no. 3, 803--844. 



\bibitem{Nishiyama90}
Kyo Nishiyama, 
\emph{Oscillator representations for orthosymplectic algebras}.
J. Algebra 129 (1990), no. 1, 231--262. 

\bibitem{SaSt}
Abdus Salam and J. A. Strathdee, 
\emph{Unitary representations of super-gauge symmetries}.
Nuclear Phys. B 80:3 (1974), 499--505. 


\bibitem{Scheunert}
Manfred Scheunert,
\emph{Generalized Lie algebras}.
J. Math. Phys. 20 (1979), no. 4, 712--720. 

\bibitem{Warner}
Garth Warner,
\emph{Harmonic analysis on semi-simple Lie groups. I.}
Die Grundlehren der mathematischen Wissenschaften, Band 188. Springer-Verlag, New York-Heidelberg, 1972. xvi+529 pp. 

\end{thebibliography}
\end{document}